\newtheorem{thm}{Theorem}
\newtheorem{lem}[thm]{Lemma}
\newtheorem{defn}[thm]{Definition}
\newtheorem{cor}[thm]{Corollary}
\newtheorem{exam}{Example}
\DeclareMathAlphabet{\mathrmbf}{\encodingdefault}{\rmdefault}{bx}{n}
\title{Congruence Closure Modulo Permutation Equations}
\author{Dohan Kim and Christopher Lynch
\institute{Clarkson University, Potsdam, NY, USA}
\email{\{dohkim,clynch\}@clarkson.edu}
}
\begin{document}
\maketitle

\begin{abstract}
%We present a congruence closure modulo a subclass of leaf permutative theories.
We present a framework for constructing congruence closure modulo permutation equations, which extends the \emph{abstract congruence closure}~\cite{Tiwari2003} framework for handling permutation function symbols. Our framework also handles certain interpreted function symbols satisfying each of the following properties: idempotency ($I$), nilpotency ($N$), unit ($U$), $I\cup U$, or $N\cup U$. Moreover, it yields convergent rewrite systems corresponding to ground equations containing permutation function symbols. We show that congruence closure modulo a given finite set of permutation equations can be constructed in polynomial time using equational inference rules, allowing us to provide a polynomial time decision procedure for the word problem for a finite set of ground equations with a fixed set of permutation function symbols.
\end{abstract}

\section{Introduction}
Congruence closure procedures~\cite{Downey1980,Nelson1980,Kozen1977} have been researched for several decades, and play important roles in software/hardware verification (see~\cite{Nelson1980,Cyrluk1995,Sjoberg2015}) and satisfiability modulo theories (SMT) solvers~\cite{Barrett2018, Moura2011}. They provide fast decision procedures for determining whether a ground equation is an (equational) consequence of a given set of ground equations. (The fastest known congruence closure algorithm runs in $O(n\,\text{log}\,n)$~\cite{Kapur2019}.)\\
\indent In~\cite{Kapur1997,Tiwari2003}, some approaches to constructing the congruence closure of ground equations using completion methods were considered. These approaches capture the efficient techniques from standard term rewriting for congruence closure procedures. In particular, the \emph{abstract congruence closure} approach in~\cite{Tiwari2003} (cf.\;Kapur's approach in~\cite{Kapur1997}) constructs a reduced convergent ground rewrite system $R_S$ for a finite set of ground equations $S$, which consists of either rewrite rules of the form $a\rightarrow c$ or $f(c_1, \ldots, c_{n})\rightarrow c$ or $c\rightarrow d$ for fresh constants $c_1, \ldots, c_{n},c,d$. Furthermore, $R_S$ is a conservative extension of the equational theory induced by $S$ (i.e.\;the congruence closure $CC(S)$) on ground terms, and two ground terms are congruent in $CC(S)$ iff they have the same normal form w.r.t.\;$R_S$. Note that this approach does not require a total termination ordering on ground terms.\\
\indent Congruence closure procedures were extended to congruence closure procedures modulo theories in order to handle interpreted function symbols in the signature~\cite{Bachmair2000,Kapur2019,Baader2020}. The notion of congruence closure modulo associative and commutative (AC) theories was discussed in~\cite{Bachmair2000, Kapur2021}, and the notion of conditional congruence closure with uninterpreted and some interpreted function symbols was considered in~\cite{Kapur2019}.\\
\indent Meanwhile, an equation is a \emph{permutation equation}\cite{Avenhaus2004} if it is of the form $f(x_1,\ldots,x_n)\approx f(x_{\pi(1)},\ldots,\\x_{\pi(n)})$, where $\pi$ is a permutation on the set $\{1,\ldots, n\}$. Commutativity is the simplest case of  permutation equations. Permutation equations are difficult to handle in equational reasoning without using the modulo approach. For example, an ordered completion procedure for \emph{ordered rewriting}~\cite{Bachmair1991c} produces every equation of the form $f(x_1, x_2, \ldots, x_n) \approx f(x_{\rho(1)}, x_{\rho(2)}, \ldots, x_{\rho(n)})$ (up to variable renaming) from two permutation equations $f(x_1, x_2, \ldots, x_n) \approx f(x_2, x_1, x_3, \ldots, x_n)$ and $f(x_1, x_2,\ldots, x_n) \approx f(x_2, x_3, \ldots, x_n, x_1)$, where $\rho$ is a permutation in the symmetric group $S_n$ of cardinality $n!$. (Recall that the symmetric group $S_n$  can be generated by two cycles $(1\,2)$ and $(1\,2\,\cdots\,n)$.) Therefore, it is natural to view permutation equations as ``structural axioms'' (defining a congruence relation on terms) rather than viewing them as  ``simplifiers'' (defining a reduction relation on terms)~\cite{Bachmair1991c}.\\ 
\indent In this paper, we present a framework for generating congruence closure modulo a finite set of permutation equations. To our knowledge, it has not been discussed in the literature, and a polynomial time decision procedure for the word problem for a finite set of ground equations with a fixed set of permutation function symbols has not yet been known.\\ %One of the main technical difficulties in considering congruence closure modulo $E$ for a finite set of permutation equations $E$ is that checking syntactic equality between two terms should be replaced by checking $E$-equality between two terms. This may take exponential time if one uses a traditional method for checking $E$-equality between two terms without taking group-theoretical techniques into account. In our approach, we only need to consider checking $E$-equality between two \emph{flat} terms. We provide a simple polynomial time procedure for checking $E$-equality between two flat terms using group-theoretical techniques and multiset comparisons.\\
\indent Our framework is based on the notion of abstract congruence closure that is particularly useful for term representation and checking $E$-equality between two flat terms for a given set of permutation equations $E$, which does not require an $E$-compatible ordering (cf.\;\cite{Kim2021}). In addition, it also handles  function symbols satisfying each of the following properties: idempotency ($I$), nilpotency ($N$), unit ($U$), $I\cup U$, or $N\cup U$. (If a function symbol is a permutation function symbol satisfying one of the above properties, then it should be a commutative function symbol.)\\
\indent We show that congruence closure modulo a given finite set of permutation equations (with or without the function symbols satisfying the above properties) can be constructed in polynomial time, which provides a polynomial time decision procedure for the word problem for a finite set of ground equations with a fixed set of permutation function symbols (appearing in $E$).

%A permutation equation cannot often be oriented into a terminating rewrite rule. Although this problem can be overcome by ordered rewriting, but the ordered completion (or unfailing completion)~\cite{Bachmair1991c} is highly inefficent for permutation equations without using the modular approach. 

\section{Preliminaries}
We use the standard terminology and definitions of term rewriting~\cite{Baader1998, Dershowitz2001}, congruence closure~\cite{Tiwari2003,Downey1980,Nelson1980}, and permutation groups~\cite{Hungerford1980}. We also use some terminology and the results of permutation equations found in~\cite{Avenhaus2004,Avenhaus2001}.\\
\indent We denote by $\mathcal{T}(\mathcal{F}, \mathcal{X})$ the set of terms over a finite set of function symbols $\mathcal{F}$ and a denumerable set of variables $\mathcal{X}$. We denote by $T(\mathcal{F})$ the set of ground terms over $\mathcal{F}$. We assume that each function symbol in $\mathcal{F}$ has a fixed arity.\\
%\indent An equational theory is \emph{regular} if each equation $s \approx t$ in the theory contains the same set of variables on both sides, i.e., $Vars(s) = Vars(t)$. %An equational theory is \emph{permutative} if each equation in the theory contains the same symbols on both sides with the same number of occurrences. 
\indent An \emph{equation} is an expression $s\approx t$, where $s$ and $t$ are (first-order) terms built from $\mathcal{F}$ and $\mathcal{X}$. A \emph{ground equation} (resp.\;\emph{ground term}) is an equation (resp.\;a term) which does not contain any variable.\\
\indent We write $s[u]$ if $u$ is a subterm of $s$ and denote by $s[t]_p$ the term that is obtained from $s$ by replacing the subterm at position $p$ of $s$ by $t$.\\
\indent An \emph{equivalence} is a reflexive, transitive, and symmetric binary relation. An equivalence $\sim$ on terms is a \emph{congruence} if $s\sim t$ implies $u[s]_p\sim u[t]_p$ for all terms $s$, $t$, $u$ and positions $p$.\\
\indent An \emph{equational theory} is a set of equations. We denote by $\approx_{E}$ (called the \emph{equational theory} induced by $E$) the least congruence on $T(\mathcal{F}, \mathcal{X})$ that is stable under substitutions and contains a set of equations $E$. If $s\approx_E t$ for two terms $s$ and $t$, then $s$ and $t$ are $E$-\emph{equivalent}.\\
\indent Given a finite set $S=\{a_i \approx b_i\,|\,1\leq i \leq m\}$ of ground equations where $a_i, b_i \in T(\mathcal{F})$, the \emph{congruence closure} $CC(S)$~\cite{Kapur2019,Baader2020} is the smallest subset of $T(\mathcal{F})\times T(\mathcal{F})$ that contains $S$ and is closed under the following rules: (i) $S\subseteq CC(S)$, (ii) for every $a\in  T(\mathcal{F})$, $a\approx a \in CC(S)$ (\emph{reflexivity}), (iii) if $a\approx b \in CC(S)$, then $b\approx a \in CC(S)$ (\emph{symmetry}), (iv) if $a\approx b$ and $b\approx c \in CC(S)$, then $a\approx c \in CC(S)$ (\emph{transitivity}), and (v) if $f\in \mathcal{F}$ is an $n$-ary function symbol ($n>0$) and $a_1 \approx b_1,\ldots, a_n \approx b_n \in CC(S)$, then $f(a_1, \ldots, a_n) \approx f(b_1, \ldots, b_n)\in CC(S)$ (\emph{monotonicity}). Note that $CC(S)$ is also the equational theory induced by $S$.\\
\indent A (strict) ordering $\succ$ on terms is an irreflexive and transitive relation on $T(\mathcal{F}, \mathcal{X})$.\\
\indent Given a rewrite system $R$ and a set of equations $E$, the rewrite relation $\rightarrow_{R,E}$ on $T(\mathcal{F}, \mathcal{X})$ is defined by $s\rightarrow_{R,E} t$ if there is a non-variable position $p$ in $s$, a rewrite rule $l \rightarrow r \in R$, and a substitution $\sigma$ such that $s|_p  \, {\approx}_E \, l\sigma$ and $t=s[r\sigma]_p$. %\footnote{See the \emph{extended rewrite system for $R$ modulo $E$} in~\cite{Dershowitz2001} and~\cite{Escobar2012}.} 
%(In this case, we may also write $s\rightarrow_{R,E}^{l\rightarrow r, \sigma}t$ or simply $s\rightarrow_{R,E}^{l\rightarrow r}t$.) 
The transitive and reflexive closure of  $\rightarrow_{R,E}$ is denoted by $\xrightarrow{*}_{R,E}$. %We say that a term $t$ is $R,E$-\emph{irreducible} (or a $R,E$-normal form) if there is no term $t^\prime$ such that $t \rightarrow_{R,E} t^\prime$.\\
We say that a term $t$ is an $R,E$-\emph{normal form} if there is no term $t^\prime$ such that $t \rightarrow_{R,E} t^\prime$.\\ %Otherwise, we say that the term $t$ is $R,E$-\emph{reducible}. We denote an $R,E$-normal form (resp.\;$R$-normal form) of $t$ by $t{\downarrow}_{R,E}$ (resp.\;$t{\downarrow}_{R}$).\\
\indent The rewrite relation $\rightarrow_{R/E}$ on $T(\mathcal{F}, \mathcal{X})$ is defined by $s\rightarrow_{R/E} t$ if there are terms $u$ and $v$ such that $s\approx_E u$, $u\rightarrow_R v$, and $v\approx_E t$. We simply say the rewrite relation $\rightarrow_{R/E}$ (resp.\;$\rightarrow_{R,E}$) on $T(\mathcal{F}, \mathcal{X})$ as the rewrite relation $R/E$ (resp.\;$R,E$).
\\
\indent The rewrite relation $R,E$ is \emph{Church-Rosser modulo E} if for all terms $s$ and $t$ with $s \xleftrightarrow{*}_{R\cup E}t$, there are terms $u$ and $v$ such that $s \xrightarrow{*}_{R,E} u \xleftrightarrow{*}_{E} v \xleftarrow{*}_{R,E} t$. The rewrite relation $R,E$ is \emph{convergent modulo E} if $R,E$ is Church-Rosser modulo~$E$ and $R/E$ is well-founded.\\
\indent The \emph{depth} of a term $t$ is defined as $depth(t) = 0$ if $t$ is a variable or a constant and $depth(f(s_1,\ldots,s_n))\\ = 1 + \text{max}\{depth(s_i)\,|\,1\leq i\leq n\}$. A term $t$ is \emph{flat} if its depth is 0 or 1.\\% An equation $s\approx t$ is \emph{shallow} if both $s$ and $t$ are shallow. An equational theory is \emph{shallow} if each equation in the theory is shallow.\\
\indent An equation of the form $f(x_1,\ldots,x_n)=f(x_{\rho(1)},\ldots,x_{\rho(n)})$ is a \emph{permutation equation}\cite{Avenhaus2004} if $\rho$ is a permutation on $\{1,\ldots,n\}$. 
%\indent  An equation of the form $s\approx s^\prime$ is~\emph{leaf permutative}~\cite{Avenhaus2001} if $s$ and $s^\prime$ are \emph{linear terms} (i.e. no variable occurs twice in $s$ and $s^\prime$) that have the same set of variables and are variants of each other. (Two terms are \emph{variants} if they are instances of each other.) A set of leaf permutative equations $\{s_1\approx t_1,\ldots, s_n \approx t_n\}$ is \emph{uniform} if for all $i$ and $j$, $s_i$ and $s_j$ are variants. \\
%\indent If $C[x_1,\ldots,x_n] \approx C[x_{\rho(1)},\ldots,x_{\rho(n)}]$ is a leaf permutative equation for which all variables are indicated explicitly, then $C$ is the \emph{context} of this equation. %For example, $f(g(x_1,x_2),x_3)$ is a redex for the leaf permutative equation $f(g(x_1,x_2),x_3) \approx f(g(x_1,x_3),x_2)$. 
We use  variable naming in such a way that the left-hand side of each equation in a set of permutation equations with the same function symbol has the same name of variables $x_1,\ldots,x_k$ from left to right. (In this paper, we assume that the set of function symbols $\mathcal{F}$ in $T(\mathcal{F}, \mathcal{X})$ is finite and each function symbol in $\mathcal{F}$ has a fixed arity.) 
\\
\indent We denote by $\mathcal{F}_E$ the set of all function symbols occurring in a finite set of permutation equations $E$.\\
%Moreover, the context $C$ may have variable positions replaced by $\diamond$. For example, $f(g(\diamond,\diamond),\diamond)$ is a context of a leaf permutative equation $f(g(x,y),z) \approx f(g(x,z),y)$, while $f(g(x,y),z)$ is a redex for this equation.\\
\indent If $e:=f(x_1,\ldots,x_n) \approx f(x_{\rho(1)},\ldots,x_{\rho(n)})$ is a permutation equation, then $\rho$ is the permutation of this equation. We denote by $\pi[e]$ the permutation of $e$. For example, $\rho$ is the permutation of the permutation equation $e^\prime:=f(x_1,x_2,x_3, x_4) \approx f(x_1,x_3,x_2, x_4)$ (i.e.\;$\pi[e^\prime]=\rho$) with $\rho(1) =1, \rho(2)=3$, $\rho(3)=2$, and $\rho(4)=4$. Let $E$ be a set of permutation equations with the same top function symbol. Then $\Pi[E]$ is defined as $\Pi[E]:=\{\pi[e]\,|\,e\in E\}$. The permutation group generated by $\Pi[E]$ is denoted by $\mathbin{{<}{\Pi[E]}{>}}$.

\begin{thm}(see Theorem 1.4 in~\cite{Avenhaus2001})\label{thm:congpermGroup} Let $E$ be a set of permutation equations and let $e$ be a permutation equation such that every equation in $E\cup \{e\}$ has the same (top) function symbol. Then $E\models e$ if and only if $\pi[e] \in \mathbin{{<}{\Pi[E]}{>}}$.
\end{thm}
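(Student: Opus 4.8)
The plan is to reduce the statement to a purely group-theoretic identity. Write $\rho := \pi[e]$, let $f$ (of arity $n$) be the common top symbol of all equations in $E\cup\{e\}$, and put $H := \mathbin{{<}{\Pi[E]}{>}} \le S_n$. By completeness of equational logic (Birkhoff), $E \models e$ holds precisely when $f(x_1,\ldots,x_n) \approx_E f(x_{\rho(1)},\ldots,x_{\rho(n)})$. Hence it suffices to prove that the set
\[
  G_E \;:=\; \{\, \tau \in S_n \mid f(x_1,\ldots,x_n) \approx_E f(x_{\tau(1)},\ldots,x_{\tau(n)}) \,\}
\]
is exactly the subgroup $H$; the theorem then follows by instantiating $\tau := \rho$.

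For the inclusion $H \subseteq G_E$ (this gives the ``if'' direction), I would first check that $G_E$ is a subgroup of $S_n$. It contains the identity by reflexivity of $\approx_E$. If $\tau \in G_E$, then applying the substitution $x_i \mapsto x_{\tau^{-1}(i)}$ to the witnessing equivalence and using symmetry of $\approx_E$ shows $\tau^{-1} \in G_E$. If $\tau,\tau' \in G_E$, then applying $x_i \mapsto x_{\tau'(i)}$ to the witness for $\tau$ yields $f(x_{\tau'(1)},\ldots,x_{\tau'(n)}) \approx_E f(x_{(\tau'\tau)(1)},\ldots,x_{(\tau'\tau)(n)})$, which chains with the witness for $\tau'$ by transitivity to give $\tau'\tau \in G_E$. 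Finally $\Pi[E]\subseteq G_E$ since each $e'\in E$ lies in $\approx_E$ by definition, so $G_E$, being a subgroup containing $\Pi[E]$, contains $H$.

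For the reverse inclusion $G_E \subseteq H$ -- the ``only if'' direction, which I expect to be the heart of the proof -- I would use the standard rewrite characterization $s \approx_E t \iff s \xleftrightarrow{*}_{E} t$, where one $\leftrightarrow_E$-step replaces a subterm that is an instance of one side of an equation of $E$ by the matching instance of the other side. Take such a derivation from $f(x_1,\ldots,x_n)$ to $f(x_{\rho(1)},\ldots,x_{\rho(n)})$. The key observation is that the starting term is flat with pairwise distinct arguments, and any $\leftrightarrow_E$-step applied to a term of the form $f(x_{\tau(1)},\ldots,x_{\tau(n)})$ must act at the root (its immediate subterms are variables, so there is no other non-variable position), and the matching substitution is forced to be a renaming of $x_1,\ldots,x_n$; hence the step again produces a term $f(x_{\tau''(1)},\ldots,x_{\tau''(n)})$ with $\tau'' = \tau\sigma^{\pm 1}$ for the permutation $\sigma\in\Pi[E]$ of the equation used. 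An induction on the length of the derivation, starting from $\tau = \mathrm{id}\in H$, then shows that every term occurring in it carries a permutation in $H$; in particular $\rho\in H$.

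The main obstacle is making the ``stays flat'' argument of the last paragraph fully rigorous: one must verify that no derivation between two such flat, variable-disjoint terms can detour through terms of greater depth or with repeated or fresh variables, so that the whole derivation is nothing more than a walk in $S_n$ built from right-multiplications by generators of $H$ and their inverses. Once this normalization of derivations is in place, the group-theoretic translation in both directions is routine, and function symbols in $\mathcal{F}\setminus\{f\}$ are immaterial since they can never appear in a derivation starting from $f(x_1,\ldots,x_n)$.
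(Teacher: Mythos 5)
The paper does not actually prove this theorem; it imports it verbatim from Avenhaus et al.\ (Theorem~1.4 of the cited work), so there is no in-paper proof to compare against. Your argument is correct and is essentially the standard proof of that result: the ``if'' direction via showing $G_E$ is a subgroup of $S_n$ containing $\Pi[E]$ (your substitution computations for inverses and composition are right, including the order of composition $\tau'\tau$), and the ``only if'' direction via Birkhoff's characterization $\approx_E \;=\; \xleftrightarrow{*}_E$ and an induction on derivation length. One remark: the ``main obstacle'' you flag at the end is not really an obstacle, because your own analysis already closes it. You do not need to rule out detours globally; you only need the one-step invariant, and that is forced. A single $\leftrightarrow_E$-step from a term $f(x_{\tau(1)},\ldots,x_{\tau(n)})$ has no freedom: the proper subterms are variables, so no non-variable position other than the root can match a left- or right-hand side of an equation in $E$ (all of which are headed by $f$, by hypothesis); matching $f(y_1,\ldots,y_n)$ against $n$ pairwise distinct variables forces the substitution to be the bijective renaming $y_i\mapsto x_{\tau(i)}$; and since both sides of a permutation equation have exactly the variables $y_1,\ldots,y_n$, the resulting term is again of the form $f(x_{(\tau\sigma^{\pm1})(1)},\ldots,x_{(\tau\sigma^{\pm1})(n)})$ with $\sigma\in\Pi[E]$ --- no deeper terms, no repeated or fresh variables can arise at any intermediate stage. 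With that observation made explicit, your induction is complete as stated, so the proof stands.
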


Let $i_1, i_2,\ldots, i_r\,(r \leq n)$ be distinct elements of $I_n = \{1, 2,\ldots, n\}$. Then $(i_1\,i_2\cdots i_r)$, called \emph{a cycle of length} $r$, is defined as the permutation that maps $i_1\mapsto i_2$, $i_2\mapsto i_3$,\ldots, $i_{r-1}$$ \mapsto i_r$ and $i_r\mapsto i_1$, and every other element of $I_n$ maps onto itself. The symmetric group $S_n$ of cardinality $n!$ can be generated by two cycles $(1\,2)$ and $(1\,2\,\cdots\,n)$.

\begin{exam}\normalfont\label{ex:1}
Let $E=\{f(x_1, x_2, x_3, x_4, x_5) \approx f(x_2, x_1, x_3, x_4, x_5), f(x_1, x_2, x_3, x_4, x_5) \approx f(x_2, x_3, x_4,x_5, x_1)\}$. Then $\Pi[E]$ consists of two cycles $\{(1\,2), (1\,2\,3\,4\,5)\}$. Since two cycles $(1\,2)$ and $(1\,2\,3\,4\,5)$ generate the symmetric group $S_5$, we see that $\mathbin{{<}{\Pi[E]}{>}}$ is $S_5$. Therefore, $f(x_1, \ldots, x_5) \approx_E f(x_{\tau(1)},\ldots,x_{\tau(5)})$ for any permutation $\tau \in S_5$ by Theorem~\ref{thm:congpermGroup}.
\end{exam}

\indent Let $E$ be a finite set of permutation equations. Then $E$ can be uniquely decomposed as $\bigcup_{i=1}^n E_i$ such that (i) each $E_i$ is a finite set of permutation equations, and (ii) $E_j$ and $E_k$ with $j\neq k$ are disjoint  such that if $s_j \approx t_j \in E_j$ and $s_k \approx t_k \in E_k$, then $s_j$ and $s_k$ do not have the same top symbol (and are not variants of each other). Since we assume that each function symbol has a fixed arity, each distinct function symbol occurring in $E$ corresponds to a distinct $E_i$ in $E$. We denote by $Eq(f)$ the corresponding equational theory with terms headed by such a function symbol $f$. Now, we may apply Theorem~\ref{thm:congpermGroup} for each  $Eq(f)$ in $E$ with $f\in \mathcal{F}_E$.

\section{Congruence closure modulo permutation equations}
\begin{defn}\label{defn:rules}\normalfont
Let $K$ be a set of constants disjoint from $\mathcal{F}$.
\begin{enumerate}[label=(\roman*)]
\item A $D$-\emph{rule} (w.r.t.\;$\mathcal{F}$ and $K$) is a rewrite rule of the form $f(c_1,\ldots,c_n) \rightarrow c$, where $c_1,\ldots,c_n, c$ are constants in $K$ and $f\in \mathcal{F}$ is an $n$-ary function symbol.
\item A $C$-\emph{rule} (w.r.t.\;$K$) is a rule $c\rightarrow d$, where $c$ and $d$ are constants in $K$.
%\item A $B$-\emph{rule} (w.r.t.\;$\mathcal{F}_E$ and $K$) is a rewrite rule of the form $f(c_1,\ldots,c_n) \rightarrow c$, where $c_1,\ldots,c_n, c$ are constants in $K$ and $f\in \mathcal{F}_E$  is an $n$-ary function symbol (i.e.\;$f(x_1,\ldots,x_n) \approx f(x_{\rho(1)},\ldots,x_{\rho(n)}) \in E$). 
\end{enumerate}
\end{defn}

In Definition~\ref{defn:rules}(i), note that $f\in \mathcal{F}$ can also be a 0-ary function symbol (i.e. a constant). 
\begin{exam}
Let $E=\{f(x_1, x_2) \approx f(x_2, x_1), g(x_1, x_2, x_3) \approx g(x_2,x_1, x_3)\}$. If $\mathcal{F}=\{a,b,h, f,g\}$ with $\mathcal{F}_E=\{f,g\}$ and $P=\{f(b, g(b, a, a))\approx h(a)\}$, then $D_0=\{a\rightarrow c_0, b\rightarrow c_1, g(c_1, c_0, c_0)\rightarrow c_2, f(c_1, c_2)\rightarrow c_3, h(c_0)\rightarrow c_4\}$ is a possible set of $D$-rules over $\mathcal{F}$, and we have $K=\{c_0, c_1, c_2, c_3, c_4\}$. Using $D_0$, we can simplify the original equations in $P$, which gives the set of $C$ rules, i.e., $C_0=\{c_3 \rightarrow c_4\}$ where $c_3\succ c_4$. 
\end{exam}

%We adapt~\cite{Tiwari2003} for the following definition.

\begin{defn}\label{defn:cclosure}\normalfont
Let $E$ be a finite set of permutation equations and $K$ be a set of constants disjoint from $\mathcal{F}$. A ground rewrite system $R=D\cup C$ is a \emph{congruence closure modulo} $E$ (w.r.t.\;$\mathcal{F}$ and $K$) if the following conditions are met:
\begin{enumerate}[label=(\roman*)]
\item $D$ is a set of $D$-rules and $C$ is a set of $C$-rules, and for each constant $c \in K$, there exists at least one ground term $t \in \mathcal{T}(\mathcal{F})$ such that $t\xleftrightarrow{*}_{R,E} c$.
\item $R,E$ is a ground convergent (modulo $E$) rewrite system over $\mathcal{T}(\mathcal{F}\cup K)$.
\end{enumerate}

\indent In addition, given a set of ground equations $P$ over $\mathcal{T}(\mathcal{F}\cup K)$, $R$ is said to be a \emph{congruence closure modulo} $E$ (w.r.t.\;$\mathcal{F}$ and $K$) \emph{for} $P$ if for all ground terms $s$ and $t$ over $\mathcal{T}(\mathcal{F})$,  $s \xleftrightarrow{*}_{P\cup E}t$ iff  there are ground terms $u$ and $v$ over $\mathcal{T}(\mathcal{F}\cup K)$ such that $s \xrightarrow{*}_{R,E} u \xleftrightarrow{*}_{E} v \xleftarrow{*}_{R,E} t$.
\end{defn}

In the following, by $B$-\emph{rules with the interpreted function symbol} $g\in \mathcal{F}$, we mean either the idempotency rule ($I$): $\{g(x,x)\rightarrow x\}$ or the nilpotency rule ($N$): $\{g(x,x) \rightarrow 0\}$ or the unit rule ($U$): $\{g(x, 0) \rightarrow x, g(0, x) \rightarrow x\}$ or $I\cup U$ or $N\cup U$. 
\begin{defn}\label{defn:ccclosure}\normalfont
Let $E$ be a finite set of permutation equations and $K$ be a set of constants disjoint from $\mathcal{F}$. A ground rewrite system $R=D\cup C$ is a \emph{congruence closure modulo} $E\cup B$ (w.r.t.\;$\mathcal{F}$ and $K$) if the following conditions are met:
\begin{enumerate}[label=(\roman*)]
\item $B$ is a set of $B$-rules with the interpreted function symbol $g \in \mathcal{F}$.\footnote{If $g \in \mathcal{F}_E$, then $g$ is a commutative function symbol, i.e., $g(x_1, x_2) \approx g(x_2, x_1) \in E$.}
\item\label{represent} $D$ is a set of $D$-rules and $C$ is a set of $C$-rules, and for each constant $c \in K$, there exists at least one ground term $t \in \mathcal{T}(\mathcal{F})$ such that $t\xleftrightarrow{*}_{R,E} c$.
\item $R\cup B,E$ is a convergent (modulo $E$) rewrite system over $\mathcal{T}(\mathcal{F}\cup K, \mathcal{X})$.\footnote{In this paper, $R\cup B, E$ (resp.\;$R\cup B/E$) denotes $(R\cup B), E$ (resp.\;$(R\cup B)/E$).}
%\item Each equation in $B$ can be oriented by $\succ$ and is non-overlapping, i.e., there are no critical pairs (except trivial ones) between (oriented) equations in $B$.
\end{enumerate}

\indent In addition, given a set of ground equations $P$ over $\mathcal{T}(\mathcal{F}\cup K)$, $R$ is said to be a \emph{congruence closure modulo} $E \cup B$ (w.r.t.\;$\mathcal{F}$ and $K$) \emph{for} $P$ if for all ground terms $s$ and $t$ over  $\mathcal{T}(\mathcal{F})$, $s \xleftrightarrow{*}_{P\cup B\cup E}t$ iff  there are ground terms $u$ and $v$ over $\mathcal{T}(\mathcal{F}\cup K)$ such that $s \xrightarrow{*}_{R\cup B,E} u \xleftrightarrow{*}_{E} v \xleftarrow{*}_{R\cup B,E} t$. 
\end{defn}

Note that $B$ or $E$ can be empty in Definition~\ref{defn:ccclosure}. If $B$ is empty, then it is the same as Definition~\ref{defn:cclosure}. % If both $E$ and $B$ are empty, then it is the same with the \emph{abstract congruence closure} in~\cite{Tiwari2003}.  
Also, condition~\ref{represent} in Definition~\ref{defn:ccclosure} states that each constant $c$ in $K$ represents some term in $\mathcal{T}(\mathcal{F})$ w.r.t.\;$R,E$, meaning that $K$ contains no superfluous constants (cf.~\cite{Tiwari2003}).

\begin{defn}\label{defn:cordering}\normalfont
We denote by $W$  the infinite set of constants $\{c_0, c_1, \ldots\}$ such that $W$ is disjoint from $\mathcal{F}$, and denote by $K$ a finite subset chosen from $W$. We define orderings $\succ_K$ on $K$, and $\succ$ and $\succ_{lpo}$ on $\mathcal{T}(\mathcal{F}\cup K)$ as follows:
\begin{enumerate}[label=(\roman*)]
\item $c_i \succ_K c_j$  if $i < j$ for all $c_i, c_j \in K$.
\item $c \succ d$ if $c \succ_{K} d$, and $t\succ c$ if $t \rightarrow c$ is a $D$-rule.
\item $\succ_{lpo}$ is a lexicographic path ordering on $\mathcal{T}(\mathcal{F}\cup K)$, which can be defined from the following assumptions:\\ (iii.1) $c \succ_{lpo} d$ if $c \succ_{K} d$, \\
(iii.2) $t \succ_{lpo} c$ if $t$ is any term headed by a function symbol $f$ in $\mathcal{F}$ and $c$ is any constant in $K$, and\\(iii.3) there is a total precedence on symbols in $\mathcal{F}$.
\end{enumerate}
\end{defn}

Observe that $\succ_{lpo}$ extends $\succ$, and is total on $\mathcal{T}(\mathcal{F}\cup K)$. (If the precedence on $\mathcal{F} \cup K$  is total, then the associated lexicographic path ordering $\succ_{lpo}$ is total on $\mathcal{T}(\mathcal{F}\cup K)$ (see~\cite{Dershowitz2001}).) We emphasize that a partial ordering $\succ$ on $\mathcal{T}(\mathcal{F}\cup K)$ suffices for inference rules in Figure~\ref{fig:fig1}.

\begin{figure}
%\centering
    \begin{framed}
\begin{tabular}{l}
\LeftLabel {EXTEND:\;\;\;\;\;}
\AxiomC{$(K, P[t], R)$}
\UnaryInfC{$(K\cup\{c\}, P[c], R\cup \{t\rightarrow c\})$}
\DisplayProof\\\\
\quad\quad\quad\quad\quad\quad\; if $t\rightarrow c$ is a $D$-rule, $c \in W-K$, and $t$ occurs in some\\
\quad\quad\quad\quad\quad\quad\;  equation in $P$.\\\\

\LeftLabel  {SIMPLIFY:\;\;\;}
\AxiomC{$(K, P[t], R\cup \{t\rightarrow c\})$}
\UnaryInfC{$(K, P[c], R\cup \{t\rightarrow c\})$}
\DisplayProof\\\\
\quad\quad\quad\quad\quad\quad\; if $t$ occurs in some equation in $P$.\footnote{hi}\\\\

\LeftLabel  {REWRITE:\;\;\;}
\AxiomC{$(K, P, R \cup \{l^\prime \rightarrow r^\prime\})$}
\UnaryInfC{$(K, P\cup \{r\sigma \approx r^\prime\}, R)$}
\DisplayProof\\\\
\quad\quad\quad\quad\quad\quad\; if $l^\prime = l\sigma$, where $l\rightarrow r \in B$.\\\\

\LeftLabel  {ORIENT:\;\;\;\;\;\;}
\AxiomC{$(K, P\cup\{s\approx t\}, R)$}
\UnaryInfC{$(K, P, R\cup\{s \rightarrow t\})$}
\DisplayProof\\\\
\quad\quad\quad\quad\quad\quad\; if $s\succ t$, and $s\rightarrow t$ is a $D$-rule or a $C$-rule.\\\\

\LeftLabel  {DEDUCE:\;\;\;\;\;}
\AxiomC{$(K, P, R\cup \{s\rightarrow c\,,\, t \rightarrow d\})$}
\UnaryInfC{$(K, P\cup \{c\approx d\}, R\cup \{t \rightarrow d\})$}
\DisplayProof\\\\
\quad\quad\quad\quad\quad\quad\; if $s\approx_E t$. \\\\

\LeftLabel  {DELETE:\;\;\;\;\;\;}
\AxiomC{$(K, P\cup\{s\approx t\}, R)$}
\UnaryInfC{$(K, P, R)$}
\DisplayProof\\\\
\quad\quad\quad\quad\quad\quad\; if $s\approx_E t$.\\\\

\LeftLabel  {COMPOSE:\;\;\;}
\AxiomC{$(K, P, R\cup \{t\rightarrow c\,,\, c \rightarrow d\})$}
\UnaryInfC{$(K, P, R\cup \{t\rightarrow d\,,\, c \rightarrow d\})$}
\DisplayProof\\\\

\LeftLabel  {COLLAPSE:\;\;}
\AxiomC{$(K, P, R\cup \{t[c]\rightarrow c^\prime\,,\, c \rightarrow d\})$}
\UnaryInfC{$(K, P, R\cup \{t[d]\rightarrow c^\prime\,,\, c \rightarrow d\})$}
\DisplayProof\\\\
\quad\quad\quad\quad\quad\quad\; if $c$ is a proper subterm of $t$ and $c \rightarrow d$ is a $C$-rule.

\end{tabular}
\end{framed} 
\caption{Inference rules for congruence closure modulo permutation equations}
    \label{fig:fig1}
\end{figure}

Figure~\ref{fig:fig1} shows the inference rules for congruence closure modulo permutation equations, which extends the inference rules for the abstract congruence closure framework in~\cite{Tiwari2003}. We have the additional inference rule called the REWRITE rule in Figure~\ref{fig:fig1}. Also, we use the $E$-equality $\approx_E$ instead of the equality $\approx$ for the DEDUCE and DELETE inference rules. We write $(K, P, R) \vdash (K^\prime, P^\prime, R^\prime)$ to indicate that $(K^\prime, P^\prime, R^\prime)$ can be obtained from $(K, P, R)$ by application of an inference rule in Figure~\ref{fig:fig1}, where $K$ denotes a set of new constants (see Definition~\ref{defn:cordering}), $P$ a set of equations, and $R$ a set of rewrite rules consisting of $C$-rules and $D$-rules. Also,  in Figure~\ref{fig:fig1}, $B$ denotes a set of $B$-rules. A \emph{derivation} is a sequence of states $(K_0, P_0, R_0)\vdash(K_1, P_1, R_1)\vdash\cdots$.

\begin{lem}\label{lem:congequiv}
If $(K, P, R) \vdash (K^\prime, P^\prime, R^\prime)$, then for all $u$ and $v$ in $\mathcal{T}(\mathcal{F}\cup K)$, we have $u \xleftrightarrow{*}_{E\cup B\cup P^\prime\cup R^\prime} v$  if and only if $u \xleftrightarrow{*}_{E\cup B\cup P\cup R} v$.
\end{lem}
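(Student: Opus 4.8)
The plan is to prove the equivalence by a case analysis on which inference rule was applied in the step $(K, P, R) \vdash (K', P', R')$. In every case one direction is essentially trivial, because most rules only replace a fact by a logical consequence of the remaining facts together with $E\cup B$; the work is to check that no information is lost, i.e.\ that the discarded equation or rule is recoverable from what stays. Since the congruence generated by $E\cup B\cup P\cup R$ is closed under reflexivity, symmetry, transitivity, monotonicity, and stability under substitution, it suffices in each case to verify the two inclusions $\,\leftrightarrow^{*}_{E\cup B\cup P\cup R}\,\supseteq\,\leftrightarrow^{*}_{E\cup B\cup P'\cup R'}$ and $\,\leftrightarrow^{*}_{E\cup B\cup P\cup R}\,\subseteq\,\leftrightarrow^{*}_{E\cup B\cup P'\cup R'}$ on the set of equations/rules that differ between the two states, because $\leftrightarrow^{*}$ of each side is the smallest congruence containing the corresponding set.

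The individual cases go as follows. For \textsc{Extend}, $R'=R\cup\{t\to c\}$ and $P'$ is $P$ with an occurrence of $t$ replaced by the fresh constant $c$; since $c\in W-K$ is new, $t\leftrightarrow^{*}c$ holds trivially in $E\cup B\cup P'\cup R'$, and replacing $t$ by $c$ inside an equation of $P$ is a valid congruence step in both directions, so the closures coincide on $\mathcal{T}(\mathcal{F}\cup K)$ (this is the one case where one must be a little careful about the enlarged signature — I would note that on the old signature $\mathcal{T}(\mathcal{F}\cup K)$ the fresh constant plays no role, exactly as in~\cite{Tiwari2003}). For \textsc{Simplify}, \textsc{Compose}, \textsc{Collapse}, we replace a subterm $c$ by $d$ (resp.\ by the right-hand side) using a rule $t\to c$ (resp.\ $c\to d$) that remains in $R'$; again this is a congruence step available in both directions. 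For \textsc{Orient} and \textsc{Delete}, the set of facts changes only by moving an equation $s\approx t$ between $P$ and $R$ (as $s\to t$), or by deleting an $E$-valid equation $s\approx_E t$; in the former case $\leftrightarrow^{*}_{\{s\to t\}}=\leftrightarrow^{*}_{\{s\approx t\}}$, and in the latter $s\leftrightarrow^{*}_{E}t$ already, so nothing is lost. For \textsc{Deduce}, from $s\to c$, $t\to d$ and $s\approx_E t$ we add $c\approx d$ and keep $t\to d$ while dropping $s\to c$: forward, $c\leftrightarrow s\leftrightarrow_E t\leftrightarrow d$ gives $c\approx d$ from the old facts; backward, $s\to c$ is recovered as $s\leftrightarrow_E t\to d\leftrightarrow c$ using $t\to d$ and the new $c\approx d$. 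For \textsc{Rewrite}, from $l'\to r'$ with $l'=l\sigma$, $l\to r\in B$, we add $r\sigma\approx r'$ and drop $l'\to r'$: forward, $r\sigma\leftrightarrow_B l\sigma=l'\to r'$; backward, $l'=l\sigma\leftrightarrow_B r\sigma\leftrightarrow r'$ using the new equation. In each subcase I would make the needed $\leftrightarrow^{*}$-chain explicit in one line.

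The only genuinely delicate point — and the one I would single out as the main obstacle — is bookkeeping around the signature and the universal quantifier ``for all $u,v\in\mathcal{T}(\mathcal{F}\cup K)$''. In the \textsc{Extend} case $K$ grows, so strictly the statement as written compares two closures on the \emph{smaller} set $\mathcal{T}(\mathcal{F}\cup K)$ even though $R'$ mentions the new constant $c$; I would argue that any $\leftrightarrow^{*}_{E\cup B\cup P'\cup R'}$-derivation between terms in $\mathcal{T}(\mathcal{F}\cup K)$ can be rewritten to avoid $c$ entirely by back-substituting $t$ for $c$ (using $t\to c$), reducing to a derivation over the old facts; this mirrors the conservative-extension argument in~\cite{Tiwari2003}. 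A secondary subtlety is that for \textsc{Rewrite} the rules in $B$ have variables, so one must use stability under substitution of the congruence $\leftrightarrow^{*}$ to instantiate $l\to r$ at $\sigma$ — but this is built into $\approx_{E\cup B}$ being closed under substitutions, so no extra work is required. Once these points are dispatched, the lemma follows by combining the per-rule verifications.
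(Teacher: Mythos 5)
Your proof is correct and takes essentially the same approach as the paper: a case analysis over the inference rules, with the standard cases handled as in the abstract congruence closure literature and explicit $\xleftrightarrow{*}$-chains for \textsc{Rewrite} ($r\sigma \leftarrow_B l\sigma = l' \rightarrow r'$ in one direction, $l' = l\sigma \rightarrow_B r\sigma \leftrightarrow r'$ in the other) and \textsc{Deduce} ($c \leftarrow s \approx_E t \rightarrow d$ and $s \approx_E t \rightarrow d \leftrightarrow c$), which are exactly the chains the paper gives. Your extra care about the enlarged signature in \textsc{Extend} is a sound addition that the paper leaves implicit by deferring to~\cite{Tiwari2003}.
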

\begin{proof}
We consider each application of an inference rule $\tau$ for $(K, P, R) \vdash (K^\prime, P^\prime, R^\prime)$. If $\tau$ is EXTEND, SIMPLIFY, ORIENT, DELETE,  COLLAPSE, or COMPOSE, then the conclusion can be verified similarly to~\cite{Tiwari2003,Bachmair1991c}.\\
\indent If $\tau$ is REWRITE, then we let $P=\bar{P}$, $R=\bar{R} \cup \{l^\prime \rightarrow r^\prime\}$,  $R^\prime=\bar{R}$, $P^\prime=\bar{P} \cup \{r\sigma\approx r^\prime\}$, and $K=K^\prime$. Since $(K\cup P\cup R)-(K^\prime\cup P^\prime \cup R^\prime)=\{l^\prime\rightarrow r^\prime\}$, we need to show that $l^\prime \xleftrightarrow{*}_{E\cup B\cup P^\prime\cup R^\prime}r^\prime$. As $l^\prime=l\sigma\rightarrow_{B}r\sigma \leftrightarrow_{P^\prime}r^\prime$, we have  $l^\prime \xleftrightarrow{*}_{E\cup B\cup P^\prime\cup R^\prime}r^\prime$. Conversely, since $(K^\prime \cup P^\prime \cup R^\prime)-(K\cup P \cup R)=\{r\sigma\approx r^\prime\}$, we need to show that $r\sigma\xleftrightarrow{*}_{E\cup B\cup P\cup R} r^\prime$. As $r\sigma\leftarrow_B l\sigma=l^\prime\rightarrow_R r^\prime$, we have $r\sigma\xleftrightarrow{*}_{E\cup B\cup P\cup R} r^\prime$.\\
\indent If $\tau$ is DEDUCE, then let $R=\bar{R}\cup \{s\rightarrow c, t \rightarrow d\}$, $P^\prime=P\cup \{c\approx d\}$, $R^\prime = \bar{R} \cup \{t\rightarrow d\}$, and $K=K^\prime$, where $s\approx_E t$. Since $(K\cup P\cup R)-(K^\prime\cup P^\prime \cup R^\prime)=\{s\rightarrow c\}$, we need to show that $s \xleftrightarrow{*}_{E\cup B\cup P^\prime\cup R^\prime} c$.  As $s\xleftrightarrow{*}_E t \rightarrow_{R^\prime} d \leftrightarrow_{P^\prime} c$, we have $s \xleftrightarrow{*}_{E\cup B\cup P^\prime\cup R^\prime} c$. Conversely, since $(K^\prime \cup P^\prime \cup R^\prime)-(K\cup P \cup R)=\{c \approx d\}$, we need to show that $c \xleftrightarrow{*}_{E\cup B\cup P\cup R} d$. As $c\leftarrow_R s \xleftrightarrow{*}_E t \rightarrow_R d$, we have $c \xleftrightarrow{*}_{E\cup B\cup P\cup R} d$.
\end{proof}

\begin{defn}\normalfont
(i) A derivation is said to be \emph{fair} if any inference rule that is continuously enabled is applied eventually.\\
(ii) By a \emph{fair} $\mu$-\emph{derivation}, we mean that the EXTEND and SIMPLIFY rule are applied eagerly in a fair derivation.
\end{defn}

\begin{thm}
Let $(K_0, P_0, R_0) \vdash (K_1, P_1, R_1)\vdash\cdots$ be a fair $\mu$-derivation such that $P_0$ is a finite set of ground equations with $K_0=\emptyset$ and $R_0=\emptyset$. \\
(i) Each fair $\mu$-derivation starting from the initial state $(K_0, P_0, R_0)$ is finite.\\
(ii) If $(K_n, P_n, R_n)$ is a final state $($i.e.\;no inference rule can be applied to $(K_n, P_n, R_n))$ of a fair $\mu$-derivation starting from the initial state $(K_0, P_0, R_0)$, then $R_n\cup B, E$ is convergent modulo $E$, and $R_n$ is a congruence closure modulo $E\cup B$ for $P_0$.
\end{thm}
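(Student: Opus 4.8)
The plan is to split the argument along the two claims, handling termination first and then the correctness/convergence statement, following the blueprint of the abstract congruence closure completeness proof in \cite{Tiwari2003} but tracking the extra bookkeeping caused by $E$ and $B$.

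\textbf{Part (i): finiteness.} First I would observe that EXTEND can fire only finitely often: each application consumes a proper non-constant subterm occurrence from some equation in $P$ and replaces it by a fresh constant, and in a fair $\mu$-derivation EXTEND and SIMPLIFY are applied eagerly, so after the initial burst every term in $P$ is flat and $|K|$ is bounded by the number of distinct non-constant subterms occurring in $P_0$ (which is finite since $P_0$ is a finite set of ground equations). Once $K$ has stabilised, ORIENT turns equations into $D$- or $C$-rules; the number of possible $D$-rules $f(c_1,\dots,c_n)\to c$ over the fixed finite $\mathcal F$ and the now-bounded $K$ is finite, and likewise for $C$-rules. REWRITE only fires on rules whose left side is a $B$-instance, i.e.\ on rules of a very restricted shape, and each such application removes one rule while adding a ground equation over the bounded signature; I would bound these by noting $B$ is fixed and the set of flat ground terms over $\mathcal F\cup K$ is finite. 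The remaining rules (SIMPLIFY, COMPOSE, COLLAPSE, DELETE, DEDUCE) either strictly decrease a well-founded complexity measure on the multiset of rules/equations (ordered by $\succ_{lpo}$, which is total and well-founded on $\mathcal T(\mathcal F\cup K)$ once $K$ is fixed) or strictly decrease $|P|$ without enabling anything new. Assembling these into a single well-founded measure — lexicographically: (number of non-flat subterm occurrences in $P$, $|K|$ ascending vs.\ some cap, multiset of rule left-hand sides under $\succ_{lpo}$, $|P|$) — gives termination. The care needed here is that DEDUCE can add equations to $P$, so I must check DEDUCE strictly decreases something; it does, because it replaces the rule $s\to c$ by $t\to d$ with $t\preceq s$ (after normalisation $t$ is smaller or the equation $c\approx d$ is then oriented/deleted), and fairness prevents looping on $E$-equal pairs.

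\textbf{Part (ii): convergence and congruence closure.} For the final state $(K_n,P_n,R_n)$, I would first argue $P_n=\emptyset$: any non-flat equation enables EXTEND, any flat equation with a reducible side enables SIMPLIFY, and any remaining flat equation $s\approx t$ with $s\succ t$, $s\approx_E t$, or $s$ a $B$-instance enables ORIENT, DELETE, or REWRITE respectively — and $\succ_{lpo}$ being total means for $s\ne t$ one of $s\succ t$, $t\succ s$ holds, so no equation can survive. Hence $R_n\cup B$ is the whole state. Then I would invoke Lemma~\ref{lem:congequiv} along the derivation to get $\xleftrightarrow{*}_{E\cup B\cup P_0}\ =\ \xleftrightarrow{*}_{E\cup B\cup R_n}$ on $\mathcal T(\mathcal F\cup K_n)$. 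Convergence modulo $E$ splits into termination of $R_n\cup B/E$ and the Church–Rosser-modulo-$E$ property. Termination follows because $\succ_{lpo}$ orients every $D$- and $C$-rule (by Definition~\ref{defn:cordering}), the $B$-rules are terminating and $E$-compatible in the relevant classes, and $E$ being a set of permutation equations preserves term size so $R_n\cup B/E$ is well-founded. For Church–Rosser modulo $E$ I would use a critical-pair / local-confluence-modulo-$E$ argument (Bachmair–Dershowitz style, as in \cite{Bachmair1991c}): the final state has no applicable DEDUCE, COMPOSE, COLLAPSE, or REWRITE, which is exactly saying all $E$-overlaps between $D$-rules, all overlaps of $C$-rules into $D$/$C$-rules, and all $B$-overlaps have been joined; then local confluence modulo $E$ plus well-foundedness gives Church–Rosser modulo $E$. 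Finally, condition (i) of Definition~\ref{defn:ccclosure} (no superfluous constants) holds because every constant in $K_n$ was introduced by an EXTEND step abstracting an actual subterm, and COLLAPSE/COMPOSE preserve the invariant $t\xleftrightarrow{*}_{R,E}c$; combining this with the equivalence from Lemma~\ref{lem:congequiv} and the just-established convergence modulo $E$ yields that $R_n$ is a congruence closure modulo $E\cup B$ for $P_0$.

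\textbf{Main obstacle.} The genuinely delicate step is the Church–Rosser-modulo-$E$ verification: I must enumerate exactly which local peaks (and $E$-peaks $u\ _E\!\leftrightarrow v\to w$) can arise between $D$-rules, $C$-rules, and $B$-rules, show each corresponds to an inference rule (DEDUCE for $D/D$ $E$-overlaps, COMPOSE/COLLAPSE for $C$ into $D/C$, REWRITE for $B$-instances), and confirm that absence of applicable inferences at the final state closes every peak via $\xrightarrow{*}_{R_n\cup B,E}\cdot\xleftrightarrow{*}_E\cdot\xleftarrow{*}_{R_n\cup B,E}$. The subtlety is that because flat $D$-rules $f(c_1,\dots,c_n)\to c$ have no proper non-variable subterms other than constants, the only overlaps are at the root modulo $E$ (two $E$-equal left-hand sides) or via a $C$-rule rewriting an argument constant — a finite, well-understood case analysis — but one must also check the interaction of $B$-rules (whose left sides are non-ground, e.g.\ $g(x,x)\to x$) with $D$-rules and with $E$, using the footnote hypothesis that any $g\in\mathcal F_E$ is commutative so that $B$ and $E$ are compatible. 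I expect this interaction analysis between $B$, the permutation equations $E$, and the flat $D$-rules to be where the real work lies, and it is here that restricting $B$ to the listed shapes ($I$, $N$, $U$, $I\cup U$, $N\cup U$) is essential.
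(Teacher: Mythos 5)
Your proposal is correct and follows essentially the same route as the paper: termination by bounding the number of applications of each inference rule (the paper makes your well-founded-measure argument quantitative in Lemma~\ref{lem:derivedlength}, obtaining an explicit $O(n^2)$ bound), then for (ii) showing $P_n=\emptyset$, termination of $R_n\cup B/E$ via $\succ_{lpo}$, Church--Rosser modulo $E$ via the critical pair lemma of~\cite{Bachmair1991c} applied to the non-overlapping final system, and the no-superfluous-constants condition by induction on EXTEND steps using Lemma~\ref{lem:congequiv}. The only slip is your parenthetical description of DEDUCE as ``replacing $s\to c$ by $t\to d$'' (it deletes $s\to c$ and adds the $C$-equation $c\approx d$, keeping $t\to d$), but this does not affect your termination measure since the step still strictly decreases the number of $\mathcal{F}$-symbols in $P_i\cup R_i$.
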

\begin{proof}
Since $(K_0, P_0, R_0) \vdash (K_1, P_1, R_1)\vdash\cdots$ is a fair $\mu$-derivation, this derivation can be written as  $(K_0, P_0, R_0) \vdash^* (K_m, P_m, R_m)\vdash (K_{m+1}, E_{m+1}, R_{m+1}) \vdash \cdots$, where the derivation $(K_m, P_m, R_m)\vdash (K_{m+1}, E_{m+1},\\ R_{m+1}) \vdash \cdots$ does not involve any application of the EXTEND rule, so we have the set $K_m = K_{m+1}=\cdots$.\\
\indent For (i), we provide a more concrete result in the following Lemma~\ref{lem:derivedlength}.\\
\indent  For (ii), let $(K_n, P_n, R_n)$ be a final state of a fair $\mu$-derivation starting from the state $(K_0, P_0, R_0)$. (Note that each fair $\mu$-derivation starting from the initial state $(K_0, P_0, R_0)$ is finite by (i), so we have some final state.) Observe that $P_m, P_{m+1}, \ldots$ either contains only $C$-equations or is empty, and $\succ$ can orient those $C$ equations, so $P_n=\emptyset$. Since $l\succ_{lpo} r$ for all rules $l\rightarrow r \in R_n\cup B$, we see that $R_n\cup B/E$ is terminating. 

Also, since $R_n\cup B$ is non-overlapping w.r.t.\;the rewrite system $R_n\cup B, E$ (i.e.\;there is no non-trivial critical pair between rules in  $R_n\cup B$), $R_n \cup B, E$ is Church-Rosser modulo $E$ by the critical pair lemma~\cite{Bachmair1991c}. Thus, $R_n\cup B, E$ is convergent modulo $E$. \\
\indent Finally, we show that for each constant $c \in K$, there exists at least one ground term $t \in \mathcal{T}(\mathcal{F})$ such that $t\xleftrightarrow{*}_{R_n,E} c$ by induction. Let $c$ be a constant in $K$ and $f(c_1,\ldots, c_k) \rightarrow c$ be the corresponding extension rule for $c$ when $c$ was added. By induction hypothesis, we have $s_i \xleftrightarrow{*}_{R_n, E} c_i$, and thus $f(s_1,\ldots, s_k) \xleftrightarrow{*}_{R_n, E} f(c_1,\ldots, c_k) \rightarrow_{\cup_i R_i} c$. By Lemma~\ref{lem:congequiv}, we also have $f(s_1,\ldots, s_k) \xleftrightarrow{*}_{R_n\cup P_n B\cup E} c$, and thus $f(s_1,\ldots, s_k) \xleftrightarrow{*}_{R_n\cup  B\cup E} c$ because $P_n = \emptyset$. As $R_n\cup B, E$ is convergent modulo $E$ and no more REWRITE rule can be applied to $f(s_1,\ldots, s_k)$ by fairness of the derivation, we have $f(s_1,\ldots, s_k) \xleftrightarrow{*}_{R_n, E} c$. Thus $R_n$ is a congruence closure modulo $E\cup B$ for $P_0$.
\end{proof}

In the following lemma, recall that function symbols in $\mathcal{F}$ include 0-ary function symbols in $\mathcal{F}$, i.e., constants in $\mathcal{F}$.
\begin{lem}\label{lem:derivedlength}
Let $(K_0, P_0, R_0) \vdash (K_1, P_1, R_1)\vdash\cdots$ be a fair $\mu$-derivation such that $P_0$ is a finite set of ground equations with $K_0=\emptyset$ and $R_0=\emptyset$. Then its derivation length is bounded by $O(n^2)$, where $n$ is the sum of the sizes (number of symbols) of the left-hand and right-hand sides of equations in $P_0$.
\end{lem}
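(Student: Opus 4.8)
The plan is to bound the number of applications of each inference rule over the entire derivation, charging every ``simplifying'' step to the strictly monotone progress it makes in the finite linear order $\succ_K$ on constants, whose height is $O(n)$.

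I would first count the ``constructive'' steps. Because EXTEND and SIMPLIFY are applied eagerly, the input is flattened at the outset exactly as in~\cite{Tiwari2003}: each such step replaces an occurrence of a subterm of $P$ by a single constant, so EXTEND is applied $O(n)$ times and introduces a set $K$ of $O(n)$ constants, after which $P$ consists only of $C$-equations ($O(n)$ of them) and $R$ of $O(n)$ $D$-rules. Moreover, no equation subsequently added to $P$ is compound, so no later step creates a new $D$-rule; in particular, eagerness of EXTEND means that ORIENT is never applied to a $D$-equation and hence only produces $C$-rules. Thus \emph{every} $D$-rule ever in $R$ is one of the $O(n)$ rules introduced by EXTEND.

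The crucial point --- and the one I expect to be the main obstacle --- is that in a fair $\mu$-derivation DEDUCE is applied only to pairs of \emph{distinct $D$-rules}. Since $\approx_E$ is generated by permutation equations, a constant is $E$-equal only to itself; hence, in an application of DEDUCE to $\{s\to c,\ t\to d\}$ with $s\approx_E t$, either $s,t$ are both compound (two $D$-rules), or $s=t$ is a constant, which would force two $C$-rules with left-hand side $s$ to be present in $R$. The latter is impossible in a $\mu$-derivation: once $R$ contains a $C$-rule with left-hand side $s$, eager SIMPLIFY removes every later occurrence of $s$ from $P$, so ORIENT never produces a second such rule (and COMPOSE merely overwrites the existing one). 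Given this, each DEDUCE step, like each REWRITE step, permanently removes a distinct $D$-rule from $R$, so there are $O(n)$ DEDUCE and $O(n)$ REWRITE steps in all. Each of them adds one $C$-equation to $P$, so at most $O(n)$ equations are ever introduced into $P$; since ORIENT and DELETE each remove an equation, there are $O(n)$ ORIENT and $O(n)$ DELETE steps, and so at most $O(n)$ $C$-rules are ever created. (Without the confinement of DEDUCE to $D$-rules, the counts of ORIENT steps, DEDUCE steps, and newly generated $C$-equations are mutually dependent, which is what makes this step delicate.)

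It then remains to bound SIMPLIFY, COMPOSE, and COLLAPSE, where the quadratic factor arises, using that a $C$-rule $c\to d$ produced by ORIENT satisfies $c\succ_K d$, i.e.\ $d$ has strictly larger index than $c$. A SIMPLIFY step on a $C$-equation replaces the constant on one side by one of strictly larger index, so each of the $O(n)$ equations ever introduced into $P$ is SIMPLIFY-reduced $O(n)$ times, giving $O(n^2)$ SIMPLIFY steps overall (the remaining, compound-flattening, SIMPLIFY steps are already among the $O(n)$ flattening steps). A COMPOSE step replaces the right-hand side of a rule by a constant of strictly larger index, so each of the $O(n)$ rules ever created is COMPOSE-modified $O(n)$ times: $O(n^2)$. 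A COLLAPSE step replaces an argument constant in the left-hand side of a $D$-rule by one of strictly larger index; since $\mathcal F$ is fixed, each $D$-rule has $O(1)$ argument positions, so there are $O(n)$ such positions in all, each COLLAPSE-modified $O(n)$ times: $O(n^2)$. Summing the $O(n)$ contributions of EXTEND, ORIENT, DEDUCE, REWRITE, and DELETE with the $O(n^2)$ contributions of SIMPLIFY, COMPOSE, and COLLAPSE yields the bound $O(n^2)$ on the derivation length.
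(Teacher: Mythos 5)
Your proof is correct and reaches the bound by the same overall strategy as the paper: count applications of each inference rule separately, with the quadratic factor coming in both cases from $O(n)$ constant positions each rewritable only $O(n)$ times along a strictly descending chain in $\succ_K$. The bookkeeping differs in the linear part. The paper lumps SIMPLIFY, REWRITE, DEDUCE, COMPOSE and COLLAPSE together, charging each step either to the non-increasing count of $\mathcal{F}$-symbols in $P_i\cup R_i$ or to a constant-rewriting step, and is content to bound ORIENT by $O(n^2)$ by charging it to the steps that generate equations; you instead bound DEDUCE and REWRITE by the $O(n)$ budget of $D$-rules created by EXTEND, and from that derive the sharper $O(n)$ bounds on ORIENT and DELETE by counting equations ever added to $P$. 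The extra argument you supply --- that DEDUCE never fires on a pair of $C$-rules because $\approx_E$ relates a constant only to itself and eager SIMPLIFY prevents two $C$-rules with the same left-hand side from coexisting --- is a detail the paper does not spell out, but it is exactly what is needed to break the circular dependency you identify among the counts of DEDUCE, ORIENT and newly generated $C$-equations, and it holds for the reason you give. Both accountings yield the stated $O(n^2)$ bound on the derivation length.
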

\begin{proof}
We show that the number of applications of each rule in Figure~\ref{fig:fig1} in a fair $\mu$-derivation is bounded above by $O(n^2)$, where $n$ is the sum of the sizes (number of symbols in $\mathcal{F}$) of the left-hand and right-hand sides of equations in $P_0$. Since $(K_0, P_0, R_0) \vdash (K_1, P_1, R_1)\vdash\cdots$ is a fair $\mu$-derivation, we may write this derivation as\\

\indent  $(K_0, P_0, R_0) \vdash^* (K_m, P_m, R_m)\vdash (K_{m+1}, E_{m+1}, R_{m+1}) \vdash \cdots$,\\

\noindent where the derivation $(K_m, P_m, R_m)\vdash (K_{m+1}, E_{m+1}, R_{m+1}) \vdash \cdots$ does not involve any application of the EXTEND rule, and thus we have the finite set $K_m = K_{m+1}=\cdots$.

\begin{enumerate}[label=(\roman*)]
\item\label{firstitem} The total number of the EXTEND inference steps is bounded by $O(n)$. This is because the total number of $\mathcal{F}$-symbols in the second component of the state is not increasing for each transition step\footnote{The only exception is the case where the REWRITE inference step using  the nilpotency rule introduces constant 0 in the second component of the state, where 0 does not occur in $P_0$. But this requires at most one additional EXTEND inference step.}  and each EXTEND inference step decreases this number by one.
\item A derivation step by the SIMPLIFY, REWRITE, DEDUCE, COMPOSE, or COLLAPSE rule either reduces the number of function symbols of $\mathcal{F}$ in $R_i\cup P_i$ or rewrites some constant. The length of a rewriting sequence $c_1\rightarrow c_2\rightarrow \cdots$ is bounded by $|K_m|$. (Here, $|K_m|$ is $O(n)$ because the total number of the EXTEND inference steps is bounded by $O(n)$ as discussed in~\ref{firstitem}.) Also, the total number of symbols in $P_i \cup R_i$ is bounded by $O(n + |K_m|)$,\footnote{The total number of symbols in $P_i \cup R_i$ for each transition step does not increase except by an EXTEND inference step, where an EXTEND inference step may increase this number by two.} which is also $O(n)$. This means that the total number of the SIMPLIFY, DEDUCE, COLLAPSE, or COMPOSE inference steps is bounded by $O(n^2)$. (Note that rewriting constants takes $O(n^2)$ because there are at most $O(n)$ constants, and the length of a rewriting sequence for each constant is bounded by $O(n)$.)
\item The total number of the DELETE inference steps is bounded by $O(n + |K_m|)$ (i.e. $O(n)$) because the total number of symbols in $P_i \cup R_i$ is bounded by $O(n + |K_m|)$.
\item The total number of the ORIENT inference steps is bounded by the total number of EXTEND, SIMPLIFY, DEDUCE, COLLAPSE, and COMPOSE inference steps, which is $O(n^2)$. Note that each ORIENT inference step neither increases the number of function symbols nor the number of constants.
\end{enumerate}
Thus, the derivation length of any fair $\mu$-derivation starting from $(K_0, P_0, R_0)$ is bounded by $O(n^2)$.
\end{proof}

Given a finite (fixed) set of permutation equations $E$ and two terms $s=f(s_1,\ldots,s_k)$ and $t=f(t_1,\ldots,\\t_k)$ with $f\in \mathcal{F}_E$, we can determine whether $s\approx_E t$ in $O(n^2)$ time (measured in $n=|s| + |t|$) using an additional data structure (i.e.\;a table) that can be constructed in polynomial time~\cite{Avenhaus2004}. If $s$ and $t$ are both flat, then we can determine whether $s\approx_E t$ in $O(n)$ time using the following procedure with a table that can be constructed in polynomial time (see~\cite{Avenhaus2004}).\\\\
\emph{Equality-Test}($s$, $t$)\\
Input: $s=f(c_1,\ldots,c_i)$ and $t=g(d_1,\ldots,d_j)$, where $s$ and $t$ are both flat.\\
Output: If $s\approx_E t$, then return true. Otherwise, return false.
\begin{enumerate}[nosep]
\item Determine whether $s$ and $t$ are headed by the same function symbol (i.e.\;$f=g$ and thus $i=j$). If not, then return false. If it is true, then consider the following:
\item Determine whether $f\in \mathcal{F}_E$. If not, then $s$ and $t$ are compared by syntactic equality, and return true if they are syntactically equal. Otherwise, if $f\in\mathcal{F}_E$, then consider the following:
\item Determine whether $s\approx_E t$ using the \emph{TestEq} procedure in~\cite{Avenhaus2004}.\\
\end{enumerate}

It is easy to see that steps 1 and 2 of the \emph{Equality-Test}($s$, $t$) procedure take at most $O(n)$ time. For step 3, which corresponds to the case $f=g$ and $f\in \mathcal{F}_E$, it takes $O(n)$ time for comparing two multisets. 

If they are equal, then $s$ and $t$ are further compared in constant time using the \emph{TestEq} procedure in~\cite{Avenhaus2004} with a table that can be constructed in polynomial time. (Note that the arity of all $f\in \mathcal{F}_E$ and the size of the data structure (i.e.\;table) is bounded by a constant independent of the size of the input terms.) Therefore, the \emph{Equality}-\emph{Test}($s$, $t$) procedure takes $O(n)$ time using a table that can be constructed in polynomial time. In what follows, we denote by $Table(Eq(f))$ this table for each $f\in \mathcal{F}_E$ for a finite (fixed) set of permutation equations $E$.

\begin{thm}\label{thm:conclosure} Given the table $Table(Eq(f))$ for each $f\in \mathcal{F}_E$, a congruence closure modulo $E\cup B$ for a finite set of ground equations $P$ can be computed in $O(n^3)$ time, where $n$ is the sum of the sizes (number of symbols) of the left and right sides of equations in $P$.
\end{thm}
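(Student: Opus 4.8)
The plan is to combine the correctness-and-termination theorem already established for fair $\mu$-derivations with the length bound of Lemma~\ref{lem:derivedlength}, and then to show that, given the tables $Table(Eq(f))$, each individual inference step can be executed in $O(n)$ time. First I would fix a concrete strategy that realizes a fair $\mu$-derivation from the initial state $(\emptyset, P, \emptyset)$: apply EXTEND and SIMPLIFY eagerly, so that the second component is immediately flattened and thereafter consists only of $C$-equations (as in the proof of the fair $\mu$-derivation theorem); then repeatedly take a $C$-equation $c\approx d$ from that component, remove it by DELETE when $c=d$ (for constants, $c\approx_E d$ holds only when $c=d$, since permutation equations do not relate constants) and otherwise orient it to a $C$-rule by ORIENT (legitimate because $\succ_K$ is total on $K$), propagate the new $C$-rule by COMPOSE and COLLAPSE, resolve any DEDUCE and REWRITE steps that become enabled, and stop when the second component is empty and no rule applies. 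By the fair $\mu$-derivation theorem, the resulting final state $(K_n,\emptyset,R_n)$ yields a congruence closure modulo $E\cup B$ for $P$, so only the time analysis remains.

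Second, I would invoke Lemma~\ref{lem:derivedlength}, which bounds the length of such a derivation by $O(n^2)$, together with its proof, from which one reads off that throughout the derivation the total number of symbols in $P_i\cup R_i$ is $O(n)$, the number of constants $|K_i|$ is $O(n)$, and hence every (flat) term appearing has size $O(n)$. It therefore suffices to implement each of the $O(n^2)$ steps in $O(n)$ time to obtain the claimed $O(n^3)$ bound.

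Third, the technical core is to exhibit the data structures making each step $O(n)$. The second component is held as a worklist of flat equations; EXTEND, SIMPLIFY, ORIENT, and DELETE touch only $O(n)$ symbols each; REWRITE only has to test whether a $D$-rule left side headed by the interpreted symbol $g$ matches one of the finitely many fixed-shape $B$-rule left sides, which is $O(1)$, and then emits a single flat $C$-equation. The only nontrivial rule is DEDUCE, for which scanning all pairs of $D$-rules is too costly; instead I would maintain the invariant that the $D$-rules have pairwise non-$E$-equivalent, $C$-reduced left sides, stored in a hash table keyed by a \emph{canonical signature}: for $f\notin\mathcal F_E$ the flat term itself, and for $f\in\mathcal F_E$ the term $f$ applied to the canonical representative of the argument tuple modulo $\langle\Pi[Eq(f)]\rangle$, which $Table(Eq(f))$ supplies in $O(1)$ time since symbols of $\mathcal F_E$ have fixed arity (cf.\ the Equality-Test analysis above). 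Whenever a $D$-rule is newly formed (after an EXTEND/SIMPLIFY) or rewritten (by COMPOSE/COLLAPSE, using for each constant a list of the $R$-rules containing it), its signature is recomputed in $O(n)$ time and looked up; a collision is exactly an enabled DEDUCE, which we carry out, discarding the redundant $D$-rule and pushing a fresh $C$-equation onto the worklist. Each insertion, lookup, and update is $O(n)$, and the number of them is bounded by the derivation length $O(n^2)$, since each is attached to one EXTEND, SIMPLIFY, COMPOSE, COLLAPSE, or DEDUCE step counted by Lemma~\ref{lem:derivedlength}.

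I expect the main obstacle to be precisely the accounting around DEDUCE and the cascade triggered by a new $C$-rule $c\to d$: rewriting $c$ to $d$ may disturb many $D$-rules, each re-insertion may trigger a further DEDUCE and hence a further $C$-rule, and one must check that this cascade stays within the $O(n^2)$ step budget of Lemma~\ref{lem:derivedlength} and that every re-insertion is genuinely $O(n)$ — in particular that the canonical-signature computation never degrades, which rests on flat terms staying of size $O(n)$ and on the fixed arity of $\mathcal F_E$-symbols. Once these points are settled, multiplying $O(n^2)$ steps by $O(n)$ per step gives the $O(n^3)$ bound; the construction cost of the tables is not counted, as they are assumed given (and is polynomial in any case by~\cite{Avenhaus2004}).
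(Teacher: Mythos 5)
Your proposal is correct and follows essentially the same route as the paper: invoke Lemma~\ref{lem:derivedlength} for the $O(n^2)$ bound on the derivation length, argue that each inference step costs $O(n)$ (using the \emph{Equality-Test} procedure with $Table(Eq(f))$ for the $E$-equality checks in DEDUCE and DELETE), and multiply. Your hash-table/canonical-signature machinery for locating DEDUCE candidates is a welcome implementation refinement of the paper's blunter per-step cost claim, but it does not change the structure of the argument.
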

\begin{proof}
We first construct a fair $\mu$-derivation $(K_0, P_0, R_0) \vdash (K_1, P_1, R_1)\vdash\cdots$  such that $P_0$ is a finite set of ground equations with $K_0=\emptyset$ and $R_0=\emptyset$.\\%Then the derivation length of the fair $\mu$-derivation is bounded by $O(n^2)$ by Lemma~\ref{lem:derivedlength}.
\indent It is easy to see that each EXTEND, SIMPLIFY, ORIENT, COMPOSE, and COLLAPSE inference step in the derivation takes $O(n)$ time.\\
\indent Each REWRITE inference step in the derivation takes $O(n)$ time because we only need to consider for rules $g(x,x)\rightarrow x$, $g(x,x) \rightarrow 0$, $g(x, 0) \rightarrow x$, and $g(0, x) \rightarrow x$ for some interpreted function symbol $g\in \mathcal{F}$.\\
\indent Each DEDUCE inference step in the derivation takes $O(n)$ time for checking $E$-equality (see the \emph{Equality-Test}($s$, $t$) procedure) between two left-hand side terms $s$ and $t$ in $R_i$. Similarly, each DELETE inference step in the derivation takes $O(n)$ time for checking $E$-equality.\\
\indent By Lemma~\ref{lem:derivedlength}, we know that the derivation length of a fair $\mu$-derivation is bounded by $O(n^2)$. Since each inference step in the derivation takes $O(n)$ time, a congruence closure modulo $E\cup B$ for a finite set of ground equations $P$ can be computed in $O(n^3)$ time.
\end{proof}

\begin{cor}
The word problem for a finite set of ground equations $P$ with a fixed set of permutation function symbols is decidable in polynomial time.
\end{cor}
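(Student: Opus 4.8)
The plan is to obtain the corollary as a routine consequence of Theorem~\ref{thm:conclosure}: we build a congruence closure for $P$ in polynomial time and then decide the query equation by comparing normal forms. Throughout, $E$ (a fixed finite set of permutation equations over the fixed symbols $\mathcal{F}_E$) and the fixed set $B$ of $B$-rules are treated as part of the ambient theory; since $E$ is fixed, each table $Table(Eq(f))$, $f\in\mathcal{F}_E$, is precomputed once (in time polynomial in $|E|$, hence $O(1)$ here), so that every call to \emph{Equality-Test} on flat terms runs in $O(n)$.

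First I would handle the input. An instance of the word problem consists of a finite set $P$ of ground equations and two ground terms $s,t$ over $\mathcal{T}(\mathcal{F})$; write $n:=|P|+|s|+|t|$. Put $P_0:=P\cup\{s\approx s,\;t\approx t\}$. Adjoining the trivial equations does not change the congruence generated by $P\cup E\cup B$ on $\mathcal{T}(\mathcal{F})$, and $P_0$ is still a finite set of ground equations of size $O(n)$. By Theorem~\ref{thm:conclosure} (with the precomputed tables), a fair $\mu$-derivation from $(\emptyset,P_0,\emptyset)$ has length $O(n^2)$ by Lemma~\ref{lem:derivedlength} and produces, in $O(n^3)$ time, a congruence closure $R=D\cup C$ modulo $E\cup B$ for $P_0$; in particular $R\cup B,E$ is convergent modulo $E$, and by Definition~\ref{defn:ccclosure}, for all ground $u,v$ over $\mathcal{T}(\mathcal{F})$ we have $u\xleftrightarrow{*}_{P\cup E\cup B}v$ iff there are ground $w,w'$ over $\mathcal{T}(\mathcal{F}\cup K)$ with $u\xrightarrow{*}_{R\cup B,E}w\xleftrightarrow{*}_{E}w'\xleftarrow{*}_{R\cup B,E}v$.

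Next I would turn this ``$\exists w,w'$'' condition into an effective test. Using that $R\cup B,E$ is Church--Rosser modulo $E$ and $R\cup B/E$ is well-founded, the condition holds for $u=s$, $v=t$ iff the $R\cup B,E$-normal forms $\hat s$ of $s$ and $\hat t$ of $t$ satisfy $\hat s\approx_E\hat t$. Because $s$ and $t$ occur in $P_0$ and EXTEND is applied eagerly, every subterm of $s$ (resp.\ of $t$) is represented by a constant of $K$, so the normalizations terminate with $\hat s,\hat t$ being constants; hence $\hat s\approx_E\hat t$ iff $\hat s=\hat t$. (Equivalently, $\hat s$ and $\hat t$ can be read off the final state: in a fair $\mu$-derivation, $s\approx s$ and $t\approx t$ are collapsed by SIMPLIFY to $c_s\approx c_s$ and $c_t\approx c_t$ and then deleted, and $\hat s,\hat t$ are the $R$-normal forms of $c_s,c_t$.) The remaining cost is polynomial: normalizing $s$ takes at most $|s|$ $D$- and $B$-steps followed by at most $|K|=O(n)$ $C$-steps, each step located in $O(n^2)$ time (trying the $O(n)$ rules of $R$, each $E$-match tested in $O(n)$ by \emph{Equality-Test}), and likewise for $t$; the final comparison $\hat s=\hat t$ is $O(1)$. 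Summing up, $s\xleftrightarrow{*}_{P\cup E\cup B}t$ is decided in time polynomial in $n$ (indeed $O(n^3)$), which is the claim.

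The genuinely hard work — the polynomial bound on the derivation, the treatment of permutation equations through $Table(Eq(f))$, and the interpreted symbols via the REWRITE rule — is already packaged in Theorem~\ref{thm:conclosure} and Lemma~\ref{lem:derivedlength}. What needs care here is the reduction itself: justifying that a congruence closure built from $P$ (more precisely from $P_0$) decides $E\cup B$-equality of \emph{arbitrary} ground terms, i.e.\ that inserting $s\approx s$ and $t\approx t$ forces all their subterms to be registered as constants, so that the normal forms degenerate to constants and their $E$-equality becomes syntactic equality; and the standard but necessary step of replacing the existential ``valley'' clause of Definition~\ref{defn:ccclosure} by a normal-form computation, which is licensed by convergence modulo $E$.
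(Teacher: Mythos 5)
Your proposal is correct and follows the same overall route as the paper: invoke Theorem~\ref{thm:conclosure} to build the congruence closure in polynomial time, then decide the query by normalizing $s$ and $t$ and comparing the results, with the cost of each rewrite step controlled by the \emph{Equality-Test} procedure and the $O(n)$ bound on $|K|$. The one genuine difference is how the query terms are handled. You fold $s$ and $t$ into the input by running the derivation on $P_0 = P\cup\{s\approx s,\;t\approx t\}$, so that eager EXTEND names every subterm of $s$ and $t$ by a constant of $K$; the normal forms $\hat s,\hat t$ are then constants, and the final test degenerates to syntactic equality of two constants (your justification that a term $E$-equivalent to a constant of $K$ must be that constant is what makes this work). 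The paper instead computes the closure from $P$ alone and normalizes $s$ and $t$ afterwards with respect to $R,E$; their normal forms need not be constants, and the final comparison is an $E$-equality test between two (possibly non-flat) normal forms. Your variant buys a trivial final comparison and a cleaner correctness argument, at the price of tying the closure computation to the particular query; the paper's variant computes the closure once for $P$ and can reuse it across many queries. One minor imprecision on your side: the $D$/$B$-steps and the $C$-steps on constants can interleave during normalization (a $C$-step may be needed before a $D$-rule $E$-matches), so ``at most $|s|$ $D$- and $B$-steps followed by at most $|K|$ $C$-steps'' should really be a combined bound of $O(|s|\cdot|K|)$ steps --- but this does not affect the polynomial bound, and the paper's own accounting is equally informal on this point.
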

\begin{proof}
We can decide whether $s\approx_E^?t$ for two ground terms $s$ and $t$ using a congruence closure modulo $E$ for $P$. By Theorem~\ref{thm:conclosure}, we can compute a congruence closure modulo $E$ for $P$ in polynomial time by constructing and using the table $Table(Eq(f))$ for each $f\in \mathcal{F}_E$. Let $R$ be a congruence closure modulo $E$ for $P$. We obtain each normal form of $s$ and $t$ using $R$. We first rewrite each constant symbol in $\mathcal{F}$ of $s$ and $t$ to a new constant symbol in $K$ obtained from constructing $R$, which takes $O(m)$ time where $m=|s|+|t|$. Each rewrite step either reduces the size of a term or rewrites a constant in $K$ to another constant in $K$.  The length of a rewriting sequence $c_1\rightarrow c_2\rightarrow \cdots$ is bounded by $|K|$ (i.e.\;$O(n)$), where $n$ is the sum of the sizes of the left-hand and right-hand sides of equations in $P$. We may also infer that the sum of the sizes of the left-hand and right-hand sides of the rewrite rules in $R$ is $O(n+|K|)$, which is $O(n)$. Each rewrite step takes at most $O(n^2)$ time using $R$ and the  \emph{Equality}-\emph{Test} procedure. By combining these steps together, we can decide whether $s\approx_E^?t$ for two ground terms $s$ and $t$ using their normal forms in polynomial time.
\end{proof}
%Since the table $Table(Eq(f))$ in Theorem~\ref{thm:conclosure} for each $f\in \mathcal{F}_E$ can be constructed in polynomial time, the above corollary is immediate from Theorem~\ref{thm:conclosure}, and 
The above corollary also holds if some function symbols (not necessarily permutation function symbols) satisfies the properties, such as idempotency ($I$), nilpotency ($N$), unit ($U$), $I\cup U$, or $N\cup U$.

\section{Example of congruence closure modulo $E\cup B$}
Let $B$ be the set of the equation for an idempotency function symbol $g$, i.e., $B=\{g(x,x)\rightarrow x\}$ and let $E$ be the following set of permutation equations:\\\\
$E=\{f(x_1, x_2, x_3, x_4,x_5, x_6, x_7, x_8) \approx  f(x_2, x_1, x_3, x_4,x_5, x_6, x_7, x_8),\\
\indent\;\;\;\;f(x_1, x_2, x_3, x_4,x_5, x_6, x_7, x_8) \approx f(x_2, x_3, x_4, x_1,x_5, x_6, x_7, x_8),\\
\indent\;\;\;\; f(x_1, x_2, x_3, x_4,x_5, x_6, x_7, x_8) \approx f(x_1, x_2, x_3, x_4,x_6, x_5, x_7, x_8),\\
\indent\;\;\;\; f(x_1, x_2, x_3, x_4,x_5, x_6, x_7, x_8) \approx f(x_1, x_2, x_3, x_4,x_5, x_6, x_8, x_7)\}$.\\

In this example, we may view each variable $x_i$ as a switch in a specially designed electric board, where each variable will be assigned to either constant $T$ (representing ``on'') or constant $F$ (representing ``off''). Each ground term $f(c_1, c_2, c_3, c_4,c_5, c_6, c_7, c_8)$ with $c_i = T$ or $F$ represents a certain state of this electric board. There is a special transformation button in this electric board, which may transform one state to another state of the electic board. This transformation button is represented by a function with symbol $h\notin \mathcal{F}_E$. The problem is to determine if a certain state in the electric board (represented by a term) generates a fault state (represented by term $\bot$). We see that $ \prod[E]=\{(1\,2), (1\,2\,3\,4), (5\,6),(7\,8)\}$, which means that $f(x_1, x_2, x_3, x_4,x_5, x_6, x_7, x_8) \approx_E  f(x_{\rho(1)}, x_{\rho(2)}, x_{\rho(3)}, x_{\rho(4)},x_5, x_6, x_7, x_8)$ for any permutation $\rho$ on the set $\{1,2,3,4\}$, $f(x_1, x_2, x_3, x_4,x_5, x_6, x_7, x_8) \approx_E  f(x_1, x_2, x_3, x_4,x_{\pi(5)}, x_{\pi(6)}, x_7, x_8)$ for any permutation $\pi$ on the set $\{5,6\}$, and $f(x_1, x_2, x_3, x_4,x_5, x_6, x_7, x_8) \approx_E  f(x_1, x_2, x_3, x_4,x_5, x_6, x_{\tau(7)}, x_{\tau(8)})$ for any permutation $\tau$ on the set $\{7,8\}$  (see Thereom~\ref{thm:congpermGroup}). Therefore, eight switches in the board are partitioned into three components, i.e.\;$\{x_1, x_2, x_3, x_4\}$, $\{x_5,x_6\}$ and $\{x_7, x_8\}$, where the order of ``switch on'' or ``switch off'' does not matter in each component. For example, $f(T, T, F, F,T, F, T, F)\approx_E f(F, F, T, T,F, T,T, F)$. Meanwhile, $g$ is an idempotent function symbol, which serves as a comparator for fault states. For example, if $g(\bot, f(F, F, F, T,T, T, T, F))$, then it is $\bot$ if $f(F, F, F, T,T, T, T, F)$ is $\bot$. Now we start with the following set of ground equations:
\begin{longtable}{l}
$1. \quad f(T, T, T, T,T, T, T, T)\approx \bot$\\
$2. \quad h(f(F, F, F, F,F, F, F, F)) \approx  f(F, T, F, T,F, T, F, T)$\\
$3. \quad f(T, F, F, F,F, F,F, T)\approx g(\bot, h(f(T, T, T, T,F, T,F, T)))$\\
$4. \quad h(f(T, F, T, F,T, F, T, F)) \approx  f(F, F, F, F,T, T, T, T)$\\
$5. \quad f(F, F, F, F,T, T, T, T) \approx  f(T, T, T, T,F, F, F, F)$\\
$6. \quad h(f(T, T, T, T,F, F, F, F)) \approx  f(T, T, T, T,T, F, T, F)$\\
$7. \quad h(f(T, T, T, T,F, T,F, T)) \approx  f(T, T, T, T,T, T, T, T)$
\end{longtable}
\indent We show that, for example, each of $h^4(f(F, F, F, F,F, F, F, F))$ and $f(T, F, F, F,F, F,F, T)$ is a fault state. (For notational brevity, by $h^i(t)$, we mean the function symbol $h$ is applied to term $h^{i-1}(t)$ with $h^0(t)$ denoting $t$.) The initial state is $(K_0, P_0, R_0)$, where $K_0=R_0=\emptyset$ and $P_0$ consists of the above equations $1-7$. We apply a fair $\mu$-derivation starting with $(K_0, P_0, R_0)$ and some intermediate and repetitive steps are omitted for clarity. In the following, each rewrite rule is an element of some $R_i$ and each equation is an element of some $P_j$.  We assume that $c_i \succ c_j$ if $i < j$.

\begin{longtable}{ll}
\noindent
$1(a). \quad T\rightarrow c_1, F\rightarrow c_2, \bot \rightarrow c_3$ &\quad\quad EXTEND and SIMPLIFY for 1\\
$1(b). \quad f(c_1, c_1, c_1, c_1,c_1, c_1, c_1, c_1) \rightarrow c_4$ & \\
$1(c). \quad c_4 \approx c_3$ & \\
$2(a). \quad f(c_2, c_2, c_2, c_2,c_2,c_2, c_2, c_2)\rightarrow  c_5$ &\quad\quad EXTEND and SIMPLIFY for 2 \\
$2(b). \quad h(c_5)\rightarrow c_6$ &\\
$2(c). \quad  f(c_2, c_1, c_2, c_1,c_2, c_1, c_2, c_1) \rightarrow c_7$ &\\
$2(d). \quad c_6 \approx c_7$ &\\
$3(a). \quad  f(c_1, c_2, c_2, c_2,c_2, c_2,c_2, c_1) \rightarrow c_8$ &\quad\quad EXTEND and SIMPLIFY for 3\\
$3(b). \quad f(c_1, c_1, c_1, c_1,c_2, c_1,c_2, c_1) \rightarrow  c_9$ &\\
$3(c). \quad h(c_9) \rightarrow c_{10}$ &\\
$3(d). \quad g(c_3, c_{10})\rightarrow  c_{11}$ &\\
$3(e). \quad c_8\approx c_{11}$ &\\
$4(a). \quad f(c_1, c_2, c_1, c_2,c_1, c_2, c_1, c_2) \rightarrow c_{12}$ &\quad\quad EXTEND and SIMPLIFY for 4\\
$4(b). \quad h(c_{12}) \rightarrow  c_{13}$ &\\
$4(c). \quad f(c_2, c_2, c_2, c_2,c_1, c_1, c_1, c_1)\rightarrow  c_{14}$ &\\
$4(d). \quad c_{13}\approx c_{14}$ &\\
$5(a). \quad f(c_2, c_2, c_2, c_2,c_1, c_1, c_1, c_1) \rightarrow c_{15}$ &\quad\quad EXTEND and SIMPLIFY for 5\\
$5(b). \quad f(c_1, c_1, c_1, c_1,c_2, c_2, c_2, c_2) \rightarrow c_{16}$ &\\
$5(c). \quad c_{15}\approx c_{16}$ &\\
$6(a). \quad f(c_1, c_1, c_1, c_1,c_2, c_2, c_2, c_2) \rightarrow c_{17}$ &\quad\quad EXTEND and SIMPLIFY for 6\\
$6(b). \quad h(c_{17}) \rightarrow c_{18}$ &\\
$6(c). \quad  f(c_1, c_1, c_1, c_1,c_1, c_2, c_1, c_2)\rightarrow c_{19}$ &\\
$6(d). \quad c_{18}\approx c_{19}$ &\\
$7(a). \quad f(c_1, c_1, c_1, c_1,c_2, c_1,c_2, c_1) \rightarrow c_{20}$ &\quad\quad EXTEND and SIMPLIFY for 7\\
$7(b). \quad h(c_{20}) \rightarrow c_{21}$ &\\
$7(c). \quad  f(c_1, c_1, c_1, c_1,c_1, c_1, c_1, c_1)\rightarrow c_{22}$ &\\
$7(d). \quad c_{21}\approx c_{22}$ &\\
$8(a). \quad c_7 \approx c_{12}$ \;(Rule 2(c) is now removed.)&\quad\quad DEDUCE with 2(c) and 4(a)\\
$8(b). \quad c_{14} \approx c_{15}$ (Rule 4(c) is now removed.)&\quad\quad DEDUCE with 4(c) and 5(a)\\
$8(c). \quad c_{16} \approx c_{17}$ (Rule 5(b) is now removed.)&\quad\quad DEDUCE with 5(b) and 6(a)\\
$8(d). \quad c_{9} \approx c_{20}$ \;(Rule 3(b) is now removed.)&\quad\quad DEDUCE with 3(b) and 7(a)\\
$8(e). \quad c_{19} \approx c_{20}$ (Rule 6(c) is now removed.)&\quad\quad DEDUCE with 6(c) and 7(a)\\
$8(f). \quad c_{4} \approx c_{22}$ \;(Rule 1(b) is now removed.)&\quad\quad DEDUCE with 1(b) and 7(c)\\
\end{longtable}

We next orient equations into $C$-rules and apply other inference rules. The set of $C$-rules is $C=\{c_3\rightarrow c_4, c_6 \rightarrow c_7, c_8 \rightarrow c_{11}, c_{13} \rightarrow c_{14}, c_{15}\rightarrow c_{16}, c_{18} \rightarrow c_{19}, c_{21}\rightarrow c_{22}, c_7 \rightarrow c_{12}, c_{14}\rightarrow c_{15}, c_{16} \rightarrow c_{17}, c_9 \rightarrow c_{20}, c_{19} \rightarrow c_{20}, c_4 \rightarrow c_{22}\}$. Using DEDUCE, COMPOSE, and ORIENT inference steps, it becomes $C^\prime=\{c_3 \rightarrow c_{22}, c_6 \rightarrow c_{12}, c_8 \rightarrow c_{11}, c_{13} \rightarrow c_{17}, c_{15}\rightarrow c_{17}, c_{18} \rightarrow c_{20}, c_{21}\rightarrow c_{22}, c_7 \rightarrow c_{12}, c_{14}\rightarrow c_{17}, c_{16} \rightarrow c_{17}, c_9 \rightarrow c_{20}, c_{19} \rightarrow c_{20}, c_4 \rightarrow c_{22}\}$. The REWRITE inference step 9(d) is available after the following inference steps $9(a)$, $9(b)$, and $9(c)$:
\begin{longtable}{ll}
\noindent
$9(a). \quad h(c_{20}) \rightarrow c_{10}$ &\quad\quad COLLAPSE 3(c) with $c_9 \rightarrow c_{20}$\\
$9(b). \quad c_{10} \rightarrow c_{22}$ &\quad\quad DEDUCE with 7(b) and 9(a), ORIENT, COMPOSE\\
$9(c). \quad g(c_{22}, c_{22})\rightarrow c_{11}$ &\quad\quad COLLAPSE 3(d) with $c_3 \rightarrow c_{22}$ and $c_{10} \rightarrow c_{22}$\\
$9(d). \quad c_{11} \rightarrow c_{22}$ &\quad\quad REWRITE 9(c), ORIENT\\
\end{longtable}

In the above, Rule $3(c)$ is removed after $9(a)$, Rule $9(a)$ is removed after $9(b)$, Rule $3(d)$ is removed after $9(c)$, and Rule $9(c)$ is removed after $9(d)$. We may obtain a congruence closure $R_n=C_n \cup D_n$ modulo $E\cup B$ for $P_0$ for some $n$ with some additional inference steps, where $C_n=\{c_3 \rightarrow c_{22}, c_4 \rightarrow c_{22}, c_6 \rightarrow c_{12}, c_7 \rightarrow c_{12}, c_8 \rightarrow c_{11},  c_9 \rightarrow c_{20},c_{10} \rightarrow c_{22},c_{11} \rightarrow c_{22},c_{13} \rightarrow c_{17},c_{14}\rightarrow c_{17}, c_{15}\rightarrow c_{17},c_{16} \rightarrow c_{17}, c_{18} \rightarrow c_{20},c_{21}\rightarrow c_{22}\}$ and $D_n$ consists of the following set of rules:

\begin{longtable}{ll}
\noindent
D1: $T\rightarrow c_1$ & \quad D2: $F\rightarrow c_2$\\
D3: $\bot \rightarrow c_{22}$ &\quad D4: $f(c_2, c_2, c_2, c_2,c_2,c_2, c_2, c_2)\rightarrow  c_5$\\
D5: $h(c_5)\rightarrow c_{12}$ &\quad D6: $f(c_1, c_2, c_2, c_2,c_2, c_2,c_2, c_1) \rightarrow c_{22}$\\
D7: $f(c_1, c_2, c_1, c_2,c_1, c_2, c_1, c_2) \rightarrow c_{12}$ &\quad D8: $h(c_{12}) \rightarrow  c_{17}$\\
D9: $f(c_2, c_2, c_2, c_2,c_1, c_1, c_1, c_1) \rightarrow c_{17}$  &\quad D10: $f(c_1, c_1, c_1, c_1,c_2, c_2, c_2, c_2) \rightarrow c_{17}$\\
D11: $h(c_{17}) \rightarrow c_{20}$ &\quad D12: $f(c_1, c_1, c_1, c_1,c_2, c_1,c_2, c_1) \rightarrow c_{20}$\\
D13: $h(c_{20}) \rightarrow c_{22}$ &\quad D14: $f(c_1, c_1, c_1, c_1,c_1, c_1, c_1, c_1)\rightarrow c_{22}$
\end{longtable}

\indent Now we determine whether $h^4(f(F, F, F, F,F, F, F, F))$ is a fault state: i.e., $h^4(f(F, F, F, F,F,F, F, F)) \\\approx_{R_n \cup B \cup E}^? \bot$. Since $h^4(f(F, F, F, F,F, F, F, F))\xrightarrow{*}_{R_n,E}h^4(f(c_2, c_2, c_2, c_2,c_2,c_2, c_2, c_2))\rightarrow$$_{R_n,E}\;h^4(c_5)\rightarrow_{R_n,E}h^3(c_{12})\rightarrow_{R_n,E}h^2(c_{17})\rightarrow_{R_n,E}h(c_{20})\rightarrow_{R_n,E}c_{22}$ and $\bot\rightarrow_{R_n,E}c_{22}$, it is a fault state. Similarly, we can determine whether $f(T, F, F, F,F, F,F, T)$ is a fault state. Since $f(T, F, F, F,F, F,F,T)\xrightarrow{*}_{R_n,E}f(c_1,c_2, c_2, c_2,c_2,\\ c_2,c_2, c_1)\rightarrow_{R_n,E}c_{11}\rightarrow_{R_n,E}c_{22}$ and $\bot\rightarrow_{R_n,E}c_{22}$, it is a fault state. Meanwhile, $h^2(f(F, F, T, T,F, T, F, T))$ is not a fault state, i.e., $h^2(f(F, F, T, T,F,T, F, T))\not\approx_{R_n \cup B \cup E} \bot$. Since $h^2(f(F, F, T, T, F, T, F, T))\xrightarrow{*}_{R_n\cup B,E}h^2(f(c_2, c_2, c_1, c_1, c_2, c_1, c_2, c_1))\rightarrow_{R_n\cup B, E} h^2(c_{12})\rightarrow_{R_n\cup B,E}h(c_{17})\rightarrow_{R_n\cup B,E}c_{20}$ and $\bot\rightarrow_{R_n\cup B,E}c_{22}$, it is not a fault state.\\

\section{Conclusion}
We have presented a framework for constructing congruence closure modulo a finite set of permutation equations $E$, extending the abstract congruence closure framework for handling permutation function symbols with or without the interpreted function symbols (not necessarily permutation function symbols) satisfying each of the following properties: idempotency ($I$), nilpotency ($N$), unit ($U$), $I\cup U$, or $N\cup U$. %We have provided a polynomial time method for checking $E$-equality between flat terms and shown that congruence closure modulo permutation equations can be constructed in polynomial time. 
We have provided a polynomial time decision procedure for the word problem for a finite set of ground equations with a fixed set of permutation function symbols by constructing congruence closure modulo $E$.\\
\indent Although congruence closure procedures have been widely used in software/hardware verfication and satisfiability modulo theories (SMT) solvers, congruence closure procedures with built-in permutations have not been well studied. We believe that our framework for constructing congruence closure modulo permutation equations has practical significance to software/hardware verfication and SMT solvers involving built-in permutations, where built-in permutations are represented by a finite set of permutation equations containing permutation function symbols.
%\nocite{*}
\bibliographystyle{eptcs}
\bibliography{references}
\end{document}